\newtheorem{theorem}{Theorem}[section]
\newtheorem{proposition}[theorem]{Proposition}
\newtheorem{lemma}[theorem]{Lemma}
\newtheorem{definition}[theorem]{Definition}
\newtheorem{remark}[theorem]{Remark}
\numberwithin{equation}{section}
\DeclareMathOperator{\supp}{supp}
\renewcommand\L{\mathrm{L}}
\newcommand\R{\mathbb R}
\newcommand\Z{\mathbb Z}
\newcommand{\cC}{\mathcal{C}}
\newcommand{\cA}{\mathcal{A}}
\newcommand{\cK}{\mathcal{K}}
\newcommand{\cM}{\mathcal{M}}
\newcommand{\bom}{\boldsymbol{\omega}}
\newcommand{\om}{\omega}
\newcommand\e{\mathrm{e}}
\newcommand\eps{\varepsilon}
\newcommand\La{\Lambda}
\renewcommand\P{\mathbb P}
\newcommand{\abs}[1]{\left\lvert #1 \right\rvert}
\newcommand{\set}[1]{\left\{ #1 \right\}}
\newcommand{\pa}[1]{\left( #1 \right)}
\newcommand\beq{\begin{equation}}
\newcommand\eeq{\end{equation}}
\begin{document}

\title[Localization for Cantor-Anderson]
{Localization for a continuum Cantor-Anderson  Hamiltonian}

\author{Fran\c cois Germinet}
\address{ Universit\'e de Cergy-Pontoise,
D\'epartement de Math\'ematiques,
Site de Saint-Martin,
2 avenue Adolphe Chauvin,
95302 Cergy-Pontoise cedex, France}
 \email{germinet@math.u-cergy.fr}

\author{Abel Klein}
\address{University of California, Irvine,
Department of Mathematics,
Irvine, CA 92697-3875,  USA}
 \email{aklein@uci.edu}

\thanks{A.K was  supported in part by NSF Grant DMS-0457474.}

\begin{abstract}
We prove localization at the bottom of the spectrum for a random Schr\"odinger operator in the continuum with a single-site potential probability distribution supported by a Cantor set of zero Lebesgue measure. This distribution is too singular to be treated by the usual methods. In particular,  an ``a priori" Wegner estimate is not available. To prove the result we perform a multiscale analysis following the work of Bourgain and Kenig for the Bernoulli-Anderson Hamiltonian, and obtain the required Wegner estimate scale by scale. To do so,  we generalize their argument based on Sperner's Lemma by resorting to the LYM inequality for multisets, and  combine it with the concept of scale dependent equivalent classes of configurations introduced by Germinet, Hislop and Klein for the study of Poisson Hamiltonians.
\end{abstract}

\dedicatory{Dedicated to Jean-Michel Combes on the occasion of his $\, 65^{\mathrm{th}}$ birthday.}

\maketitle

\section{Introduction and setting}
\label{sectintro}

Consider the continuum  Cantor-Anderson Hamiltonian
\begin{gather}\label{AndH}
H_{\bom} :=  -\Delta + V_{\bom} \quad \text{on} \quad \L^{2}(\R^{d}) , \\
\intertext{where the potential is given by}
  V_{\bom}(x):=  \sum_{\zeta \in \Z^d} \bom_\zeta \,u(x - \zeta), \label{AndV}
\end{gather}
where
\begin{itemize}
\item The single-site potential  $u$ is a  nonnegative, nonzero  $\mathrm{L}^{\infty}$-function on $\R^{d}$ with compact support, with 
 \begin{equation} \label{u}
u_{-}\chi_{\Lambda_{\delta_{-}}(0)}\le u \le u_{+}\chi_{\Lambda_{\delta_{+}}(0)}\quad \text{for some constants $u_{\pm}, \delta_{\pm}\in ]0,\infty[ $},
\end{equation}
$\Lambda_{L}(x)$ being  the 
box of side $L$ centered at $x \in \R^{d}$. 

\item $\bom=\{ \bom_\zeta\}_{\zeta \in \Z^d}$ is a family of independent, identically distributed random variables, with a the common probability distribution $\mu$ defined as the uniform measure on a Cantor set $\cK \subset[0,1]$ of Lebesgue measure zero ($\mu$ is constructed in Section~\ref{sectcantor}). We take the underlying probability space to be $(\Omega, \P)$
with $\Omega= \cK^{\Z^d}$, $\P=\mu^{\odot\Z^d}$, and $\{ \bom_\zeta\}_{\zeta \in \Z^d}$ the coordinate functions.
\end{itemize}

Jean-Michel Combes has made major contributions to the study of continuum  Anderson Hamiltonians,  operators of the form \eqref{AndH}-\eqref{AndV} where  the common probability distribution $\mu$ is absolutely continuous with a bounded density. Combes and Hislop  \cite{CH1} gave the first proof of Anderson localization for these random 
Schr\"odinger operators in the continuum.  Combes  and  his collaborators  \cite{CHM,CHN,CHKN,CHK1,CHKR} made important contributions in the understanding of the Wegner estimate, culminating in the recent beautiful paper \cite{CHK2}, which contains the optimal a priori Wegner estimate,  proving the long-sought Lipschitz continuity of the integrated density of states.  Combes also made important contributions to the study of  random Landau Hamiltonians \cite{CH2,CHKR}, which played an  important role in the dynamical delocalization result obtained for this model in \cite{GKS}. It is the pleasure of the authors to dedicate this paper to Jean-Michel Combes. This paper also deals with a Wegner estimate, but because of the singularity of the probability distribution, it requires alternative arguments than the ones developed by Jean-Michel Combes and his collaborators.

We now state our result. We use $\chi_x$ to denote the characteristic function of the unit cube $\Lambda_1(x)$.

\begin{theorem}\label{thmloc}
There exists $E_0>0$ such that $H_\omega$ exhibits Anderson localization as well as dynamical localization in the energy interval $[0,E_0]$. More precisely:
\begin{itemize}
\item (Anderson localization) There exists  $m>0$ such that, with probability one the operator $H_\omega$ has pure point spectrum  in $[0,E_0]$ with exponentially localized eigenfunctions with rate of decay $m$, i.e., if    $\phi$ is an eigenfunction of $H_\omega$ 
with eigenvalue $E \in[0,E_0]$ we have
\begin{equation}\label{expdecay}
 \|\chi_x \phi\| \le C_{\omega,\phi} \, e^{-m|x|}, \quad \text{for all $x \in \R^{d}$}.
\end{equation}

\item (Dynamical localization) For  all ${ s <  \frac3 8 d - }$  we have 
  \beq  {
\mathbb{E} \left\{ \sup_{t \in \mathbb{R}}
 \left\lVert {\langle} x {\rangle}^{\frac m 2}
e^{-itH_\omega} \chi_{[0,E_0]}(H_\omega) \chi_0 
 \right\rVert_2^{\frac {2 s} m} \right\}<
\infty \quad \text{for all} \; \;   m \ge 1}. \label{weakDL}
\eeq
\end{itemize}
 \end{theorem}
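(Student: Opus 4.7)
The plan is to run a variable-energy multiscale analysis (MSA) in the spirit of Bourgain-Kenig for the Bernoulli-Anderson model, with the Wegner estimate proven inductively at each scale rather than supplied a priori. The usual Combes-Hislop spectral averaging is unavailable because $\mu$ has no continuous part, so the whole difficulty is concentrated in producing, at each scale $L$, a bound of the form $\P\{\dist(\sigma(H_{\bom,\Lambda_L(x)}),E)\le\eta\}\le L^{-\beta}$ for appropriate $\eta,\beta>0$. The MSA skeleton is standard: for a box $\Lambda=\Lambda_L(x)$ one works with the Dirichlet restriction $H_{\bom,\Lambda}$, declares $\Lambda$ \emph{good} at energy $E$ when its finite-volume resolvent decays exponentially from the center to the boundary, and passes from scale $L_k$ to $L_{k+1}=L_k^\alpha$ via a geometric resolvent identity. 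Following Bourgain-Kenig I would rely on \emph{free sites}: after freezing most coordinates $\bom_\zeta$, a small reservoir of remaining sites is used to tune eigenvalues away from $E$. Since the singularity of $\mu$ prevents any use of density, the tuning is driven by the monotonicity $\partial_{\bom_\zeta}H_{\bom}=u(\cdot-\zeta)\ge u_-\chi_{\Lambda_{\delta_-}(\zeta)}\ge 0$, which forces a quantitatively nontrivial eigenvalue shift as $\bom_\zeta$ traverses $\cK$.

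The heart of the proof is the scale-by-scale Wegner estimate, and it is where I expect the main obstacle to lie. I would combine the Germinet-Hislop-Klein construction of scale-dependent equivalence classes with a multiset-LYM counting argument. Concretely, partition $\cK$ into $N=N(L)$ clopen ``Cantor cells'' of diameter $O(L^{-\gamma})$, using the self-similar base-$b$ structure of $\cK$, and declare two configurations $\bom,\bom'$ equivalent at scale $L$ when $\bom_\zeta$ and $\bom'_\zeta$ lie in the same cell for every $\zeta\in\Lambda\cap\Z^d$. Inside an equivalence class the potential varies by $O(L^{-\gamma})$, a perturbative error absorbable into the resolvent estimates, so the Wegner problem reduces to a combinatorial question on cell-labels in $\{1,\dots,N\}^{|\Lambda\cap\Z^d|}$. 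This is the multiset analogue of the Bernoulli cube on which Bourgain-Kenig invoked Sperner's lemma; its correct replacement is the LYM inequality for multisets, which controls the maximal antichain in the componentwise product poset. Combined with the monotonic eigenvalue shift above, it yields the desired polynomial probability bound.

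Given this Wegner estimate together with an initial length scale estimate at the bottom of the spectrum, provided by classical Lifshitz-tail arguments for non-negative single-site potentials (recalling that $0\in\supp\mu$), the bootstrap MSA of Germinet-Klein delivers both the exponential eigenfunction decay \eqref{expdecay} and the strong dynamical bound \eqref{weakDL}. The delicate calibration step, namely choosing $N(L)$, the cell-diameter exponent $\gamma$, the free-site reservoir size, and the Wegner exponent $\beta$ so that every requirement of the induction is met simultaneously, is the real technical core of the argument and the point at which the purely singular nature of $\mu$ makes itself felt most strongly.
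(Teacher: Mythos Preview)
Your outline matches the paper's strategy essentially point for point: Bourgain--Kenig multiscale analysis with the Wegner estimate produced inductively from free sites and eigenvalue monotonicity, the Germinet--Hislop--Klein scale-dependent equivalence classes on $\cK$, and the multiset LYM inequality replacing Sperner's lemma. Two points, however, need correction.

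First, the equivalence-class cells in $\cK$ cannot have polynomial diameter $O(L^{-\gamma})$. A good box satisfies $\|R_{\bom,\Lambda_L}(E)\|\le e^{L^{1-\eps}}$, so stability of goodness (and of the eigenvalue-derivative hypotheses) within a class forces the potential variation to be at most $e^{-L^{1-\eps}}$; moreover, in the Wegner step the lower eigenvalue derivative is only $e^{-c_3\ell^{4/3}\log\ell}$, so the inter-cell gap must beat this exponentially small quantity. The paper takes cells of width $\alpha_k=e^{-L_k}$ with $L_k\le L^{1-\eps}\le L_{k+1}$, and this exponential calibration is what makes the antichain argument go through.

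Second, the bootstrap MSA of Germinet--Klein does \emph{not} deliver the conclusion here. The LYM-based Wegner estimate yields only $\P(\text{bad box})\lesssim L^{-pd}$ with $p<\tfrac38$, which is too weak for that machinery; the paper explicitly notes that the standard MSA is unavailable for exactly this reason. Anderson and dynamical localization are instead extracted as in \cite{GHK2,GKnext} from Proposition~\ref{propMSA}, and the restriction $s<\tfrac38 d-$ in \eqref{weakDL} is a direct reflection of the weak probability exponent.
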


We thus obtain Anderson and dynamical localization with a single-site potential probability measure that is purely singular continuous with respect to Lesbegue measure. As shown in Section~\ref{sectcantor}, $\mu$ is $\log\log$-H\"older continuous which is too singular to enable us to use standard results for the Wegner estimate (such a priori Wegner estimates can be used to prove localization for single-site potential probability measures which are at least $\log$-H\"older continuous). As a consequence, the standard multiscale scale analysis \cite{FS,vDK,FK,GK1} cannot be exploited either. This is why the strong form of dynamical localization obtained in \cite{GK1} is replaced by the somewhat weaker form \eqref{weakDL} proved in \cite{GKnext}.

To prove the result we perform a multiscale analysis following  the work of Bourgain and Kenig for the Bernoulli-Anderson Hamiltonian \cite{BK}, and obtain the required Wegner estimate scale by scale. To do so, we generalize the argument based on Sperner's Lemma used in \cite{BK} by resorting to the LYM inequality for multisets (e.g., \cite{A}), and  combine it with the concept of scale dependent equivalent classes of configurations introduced by Germinet, Hislop and Klein for the study of Poisson Hamiltonians \cite{GHK1,GHK2}.

Theorem~\ref{thmloc} can be proven in great generality:  the only requirement on the  single-site potential probability measure $\mu$ is that $\set{0,\tau} \in \supp \mu \subset [0,\tau]$ for some $\tau >0$. The present note may be considered as an illustrative introduction to the general result proved in  \cite{GKnext} using the concentration bound of \cite{AGKW}.  Indeed, where here we restrict ourselves to a uniform measure and use the explicit hierarchical structure of its support (a particular Cantor set), new arguments had to be developed in \cite{AGKW,GKnext} in order to treat arbitrary measures. In particular, a Bernoulli decomposition of random variables is developed in \cite{AGKW}, which yields the concentration bound that extends the probabilistic consequences of the combinatorial Sperner's Lemma to general random variables.  This Bernoulli decomposition,   combined with the extension in \cite{GKnext} of the Bourgain-Kenig multiscale analysis to more general Bernoulli-Anderson Hamiltonians,  which incorporate an additional  background potential and  for which  the variances of the Bernoulli terms are uniformly positive, 
but not necessarily the same, yields pure point spectrum with probability one at the bottom of the  spectrum in the general case  \cite{AGKW}.  The full result, as stated in Theorem~\ref{thmloc}, with Anderson localization (pure point spectrum plus uniform exponential decay of eigenfunctions) and dynamical localization, is proved in \cite{GKnext} by an extension of the Bourgain-Kenig analysis.

\section{Construction of the probability measure}
\label{sectcantor}

We construct a Cantor set as follows. We fix $\beta>0$ and a initial scale  $L_1>1$, define a sequence of scales $L_{k+1}=L_k^\beta$, $k=1,2,\ldots$, and set $\alpha_k = \exp(-L_k)$. Let $\cK^{(0)}=[0,1]$. We remove the middle part of the interval and keep at the edges two intervals: $I^{(1)}_{0}$ on the left and $I^{(1)}_{1}$ on the right, both of length $\alpha_1$. We then repeat the procedure of each of the intervals $I^{(1)}_{0}$ and $I^{(1)}_{1}$. At generation $k$, $\cK^{(k)}$ is the union of $2^k$ disjoint intervals $I^{(k)}_j$, $j=(j_1,\cdots, j_k)\in\{0,1\}^k$, each of them with length $\alpha_k$, namely
\begin{equation}\label{Ck}
\cK^{(k)} = \bigcup_{j\in \{0,1\}^k} I^{(k)}_j \mbox{ with } |I^{(k)}_j|=\alpha_k = \exp(-L_k).
\end{equation}
Note that two intervals $I^{(k)}_j$ and $I^{(k)}_{j'}$, $j\not= j'$, are separated by a gap of size at least 
\beq\label{defgap}
G_k:= \alpha_{k-1}-2\alpha_{k} \approx \exp(-L_{k-1}).
\eeq
The Lebesgue measure of $\cK^{(k)}$ is given by
\beq
|\cK^{(k)}| = 2^k \alpha_k = 2^k \exp(-L_1^{\beta^k}),
\eeq
which goes to zero as $k$ goes to infinity. We define
\beq
\cK = \bigcap_{k=0}^\infty \cK^{(k)}.
\eeq
The set $\cK$ is a Cantor set of zero Lebesgue measure.

We construct the uniform measure $\mu$ on the Cantor set  $\cK$  as follows. At generation $k$ we consider the uniform measure $\mu^{(k)}$ on $\cK^{(k)}$, i.e.,   $\mu^{(k)}$ is the normalized Lebesgue measure on each interval $I^{(k)}_j$, with $\mu^{(k)}(I^{(k)}_j)=2^{-k}$  with for all $j$.   Note $\mu^{(k)}([0,1])=1$ by construction. The Cantor measure $\mu$ is  the unique weak limit of the $\mu^{(k)}$'s (to see uniqueness, it is enough to compute the limit on arbitrary intervals). The support of $\mu$ is the Cantor set $\cK$, which  has zero Lebesgue measure by construction. The measure $\mu$ is purely singular continuous.

It is actually easy to see that $\mu$ is $\log\log$-H\"older continuous, but not better. Indeed for any $k,j$ we have 
\beq
\mu(I^{(k)}_j)=\mu^{(k)}(I^{(k)}_j)=2^{-k},
\eeq
that is, with $\eps=|I^{(k)}_j|= \exp(-L_1^{\beta^k})$,
$$
\mu(I^{(k)}_j) = \left(\frac{\log|\log\eps|}{\log L_1}\right)^{-\log 2 / \log \beta};
$$
on the other hand, for any interval $I$ of size $\eps>0$, recalling \eqref{defgap}, if $G_{k+1}\le \eps < G_k$, then $I$ covers at most one interval $I^{(k)}_j$, and thus 
$$
\mu(I) \le 2^{-k} \lesssim \left(\frac{\log|\log\eps|}{\log L_1}\right)^{-\log 2 / \log \beta}.
$$

\section{Elements of the Multiscale Analysis}

Finite volume operators are defined as in  \cite{GHK2}.   Given   a box  $\Lambda= \Lambda_{L}(x)$  in $\R^{d}$ and a configuration $\om\in [0,1]^{\Z^d}$, we set 
\begin{align}\label{finvolH}
H_{\om,\Lambda} :=-{\Delta_{\Lambda}}+ V_{\om,\Lambda} \quad \text{on}   \quad \L^{2}(\Lambda),
\end{align}
where $\Delta_{\Lambda}$ is the  Laplacian on $\Lambda$ with Dirichlet boundary condition,  and  
\begin{equation}
 V_{\om,\Lambda}:= \chi_{\Lambda} \sum_{\zeta \in \Z^d\cap \Lambda} \om_\zeta \,u(x - \zeta).
 \end{equation}
  The finite volume resolvent  is 
 $R_{X,\Lambda} (z):=(H_{X,\Lambda} - z)^{-1}$.

 We will  identify $\L^{2}(\Lambda)$ with $\chi_{\Lambda }\L^{2}(\R^{d})$.
Note that in general we do not have $ V_{\om,\Lambda}=  \chi_{\Lambda} V_{\om,{\Lambda^{\prime}}}$ for   $\Lambda \subset \Lambda^{\prime}$, where $\Lambda^{\prime}$ may be a finite box or $\R^{d}$. But we always have
\begin{equation}
 \chi_{\widehat{\Lambda}} V_{\om,\Lambda}=  \chi_{\widehat{\Lambda}} V_{\om,{\Lambda^{\prime}}}, 
\end{equation}
where
\beq    \widehat{\Lambda}=\widehat{\Lambda}_{L}(x):=\Lambda_{L-\delta_{+} }(x) \quad  \text{with $\delta_{+}$ as in \eqref{u}},  \label{lambhat}
\eeq
which suffices for the multiscale analysis.

  The usual  definition of ``good" boxes for the multiscale analysis is as follows.

\begin{definition} \label{defwegner} Let $\eps\in]0,1[$ (small). Consider $\om \in [0,1]^{\Z^d}$, an energy $E\in \R$, and  a rate of decay $m>0$. A  box  $\Lambda_{L}$ is said to be $(\om,E,m)$-good if 
  \begin{align}\label{weg}
\| R_{\om,\Lambda_{L}}(E) \|& \le \e^{L^{1-\eps}}
\intertext{and} 
\| \chi_x R_{\om,\Lambda_{L}}(E) \chi_y \|& \le \e^{-m |x-y|},\quad  \text{for all  $x,y \in \Lambda_{L}$ with $ |x-y|\ge \tfrac L{10}$}. \label{good}
\end{align} 
\end{definition}

But  \emph{goodness} of boxes does not suffice for the induction step in the multiscale analysis given in \cite{B,BK}, which also needs an adequate supply of \emph{free sites}  to obtain a Wegner estimate at each scale.  Given $S \subset{\Lambda}\cap \Z^d$ and $t_S=\{t_\zeta\}_{\zeta \in S} \in [0,1]^S$, we set
\begin{align}\label{finvolHS}
H_{{\om},{t_S},\Lambda} :=-{\Delta_{\Lambda}}+ \chi_\Lambda V_{{\om}_\Lambda,{t_S}} \quad \text{on}   \quad \L^{2}(\Lambda),
\end{align}
where
\begin{equation}
 V_{{\om_\Lambda},{t_S}}(x):= \sum_{\zeta \in\pa{{\Lambda}\cap \Z^d}\setminus S} \omega_\zeta \,u(x - \zeta)+ \sum_{\zeta \in  S} t_\zeta \,u(x - \zeta).  \label{finvolVS}
\end{equation}
 $R_{\om,{t_S},\Lambda} (z)$ will denote the corresponding finite volume resolvent. Following \cite{BK}, sites belonging to $S$ are called {\it free sites}, and variables $t_\zeta$ with $\zeta\in S$ are called {\it free variables}.

\begin{definition} Consider $\om \in [0,1]^{\Z^d}$, an energy $E\in \R$,  a rate of decay $m>0$, and $S \subset \widehat{\Lambda}$.    A  box  $\Lambda_{L}$ is said to be $(\om ,S,E,m)$-good if 
 we have   \eqref{weg} and  \eqref{good} 
with  $R_{\om,{t_S},\Lambda} (E)$ for all $t_S \in [0,1]^{S}$.  In this case $S$ consists of $(\om,E)$-free sites for the box  $\Lambda_{L}$. 
\end{definition}
Recall the sequence of scales $L_k$ from the construction of the Cantor set $\cK$. Let $\Lambda=\Lambda_L$ be a cube of side $L$.

For $\eps>0$ (the one in Definition~\ref{defwegner}), we define $\eps'>0$ such that
\beq\label{eps}
(1+\eps')(1-\eps)=1.
\eeq

\begin{definition}\label{defclass}
Given a cube $\Lambda=\Lambda_L$, two configurations $\omega,\omega'\in\Omega$ are said to be equivalent at scale $L\in]L_{k-1}^{1+\eps'}, L_{k}^{1+\eps'}]$ (thus given $L$, the integer $k$ is unique), if for all $i\in\Z^d\cap\Lambda$, $\omega_i$ and $\omega'_i$ belong to the same interval $I^{(k)}_j$ of  $\cK^{(k)}$, and we write $\omega_i\sim_L\omega'_i$, $i\in\Z^d\cap\Lambda$, and $\omega\sim_{\Lambda}\omega'$ for the full configuration.
We denote by $[\omega_i]_{L}$ the equivalent class of $\omega_i$, and by $[\omega]_{\Lambda}$ that of $\omega$ in $\Lambda$.
\end{definition}

\begin{remark}\label{remclass}
In other terms, given $L$ as in Definition~\ref{defclass}, the quotient of $\cK=\supp\mu$ by the relation of equivalence $\sim_L$ can be identified with the set of the intervals of $\cK^{(k)}$, and hence with $\{1,\cdots, K\}$, where $K=2^k$ is the number of such intervals. We also identify the set of all equivalence classes $[\omega]_{\Lambda}$ with $\{1,\cdots, K\}^{\Z^d\cap\Lambda}$, and by $[\omega_i]_L=l$, $1\le l\le K$, we mean that we identify  $[\omega_i]_L$ with the $l^{\rm th}$ interval of $\cK^{(k)}$.
\end{remark}

Remark~\ref{remclass} motivates the following definition.

\begin{definition}
Let $\Lambda=\Lambda_{L_n}$, $K=2^n$. We denote by $C_\Lambda\simeq \{1,\cdots, K\}^{\Z^d\cap\Lambda}$ the set of collections of intervals $I_j^{(n)}$, indexed by lattice points in $\Lambda$. An element of $C_\Lambda$ will be denoted by $[\Lambda]_\Lambda$ or just $[\Lambda]$.

Similarly, if $A\subset\Lambda\cap \Z^d$, then $C_A\simeq \{1,\cdots, K\}^{A}$ denotes the set collections of intervals $I_j^{(n)}$, indexed by points in $A$. An element of $C_A$ will be denoted by $[A]_\Lambda$ or just $[A]$.

When we want to stress that we work at resolution $k$, we write $C_\Lambda^{(k)}$ and $C_A^{(k)}$.
\end{definition}

We now define the basic events (``bevents") that we shall use for the multiscale analysis. They correspond to \cite{BK}'s cylinders in our particular setting.  We rely on the construction introduced in \cite{GHK2} but the situation is a bit simpler since we do not have to introduce ``acceptable" configurations as in \cite{GHK2}. In some sense, all our configurations are ``acceptable".

\begin{definition} Let us give $\eps>0$ and a box $\Lambda=\Lambda_{L}(x)$, and $B,S$ a partition of $\Lambda\cap\Z^d$. For $k$ s.t. $L\in]L_{k-1}^{1+\eps'}, L_k^{1+\eps'}]$  we write $C_\Lambda$ in place of $C_\Lambda^{(k)}$. We call $k$ the resolution associated to $L$. An element $[\Lambda]$ of $C_\Lambda$ is written  $[\Lambda]=([B] , [S])$. A $\Lambda$-bconfset (basic configuration set) is a subset of $C_\Lambda$ of the form
\beq \label{cylinderset}
C_{\Lambda,[B],S}:=\left\{([B] , [S]), [S]\in C_S \right\},
\eeq
or, stressing we work at resolution $k$ (intervals of length $\e^{-L_k}$),
\beq \label{cylindersetresol}
C^{(k)}_{\Lambda,[B]_{L_k},S}:=\left\{([B]_{L_k} , [S]_{L_k}), [S]_{L_k}\in C_S^{(k)} \right\},
\eeq
where $[B]\in C_B$. $C_{\Lambda,[B],S}$ is a $\Lambda$-dense bconfset if
 the set of indices $S$ satisfies the density condition 
  \begin{equation}  \label{densitycond}
\# (S\cap {\widehat{\Lambda}_{L^{1-}}} )\ge L^{d-} \quad \text{for all
boxes}\quad   \Lambda_{L^{1-}}\subset \Lambda_{L}.
\end{equation}
Note that if $S=\emptyset$ then $B=\Lambda$. We then set
\beq  \label{S=empty}
C_{\Lambda,[B]}:=C_{\Lambda,[B],\emptyset}=[B] .
\eeq
We turn to random events. A $\Lambda$-bevent (basic event) is  a  subset of  $ \Omega$ of the form
\beq \label{bevent}
\cC_{\Lambda,[B],S}:= \{\omega\in\Omega, \; \{\omega\}_{i\in\Lambda} \in  C_{\Lambda,[B],S}\},
\eeq
where $[B]\in C_B$, with the analog of \eqref{cylindersetresol} at resolution $k$. $\cC_{\Lambda,[B],S}$ is a $\Lambda$-dense bevent if
  $S$ satisfies the density condition \eqref{densitycond}. In addition, we set
\beq  \label{S=empty2}
\cC_{\Lambda,[B]}:=\cC_{\Lambda,[B],\emptyset}=\{\omega\in\Omega, \; \{\omega\}_{i\in\Lambda} \in C_{\Lambda,[B]}\} = \{\omega\in\Omega, \; \{\omega\}_{i\in \Lambda} =[B]\}.
\eeq
\end{definition}

Note that for each $S_{1}\subset S$ we  have 
\begin{align} \label{cylinderexp}
C_{\Lambda,[B],S}&= \bigsqcup_{[S_1]\in C_{S_1}} C_{\Lambda,[B] \sqcup [S_1],S\setminus S_{1}},\\
\cC_{\Lambda,[B],S}&= \bigsqcup_{[S_1]\in C_{S_1}} \cC_{\Lambda,[B] \sqcup [S_1],S\setminus S_{1}}, \label{cylinderexp2}
\end{align}
where $\sqcup$ denotes a disjoint union.

When changing scales, one redraws  cylinders in the most natural way:
\beq
C^{(k)}_{\Lambda_{L_k},[B]_{L_k},S} = \bigsqcup_{[B]_{L_{k+1}}\subset[B]_{L_k}} C^{(k+1)}_{\Lambda_{L_{k}},[B']_{L_{k+1}},S} \, ,
\eeq
where $[B]_{L_{k+1}}= \set{I^{(k+1)}_{j_i}}_i\subset[B]_{L_k}=\set{I^{(k)}_{j_i}}_i$ if and only if $I^{(k+1)}_{j_i}\subset I^{(k)}_{j_i}$ for all $i\in\Lambda_{k+1}$.

\begin{definition} \label{defadapted}
 Consider an energy $E \in \R$, $m>0$, and a box 
  $\Lambda=\Lambda_{L}(x)$.  
The $\Lambda$-bevent $\cC_{\Lambda,[B],S}$ and the $\Lambda$-bconfset $C_{\Lambda,[B],S}$  are  $(\Lambda,E,m)$-good if the
box $\Lambda$ is $([B],S,E,m)$-good. 
\end{definition} 

\begin{definition}  Consider an energy $E \in \R$, a rate of decay  $m>0$, and a box   $\Lambda$.  We call $\Omega_{\Lambda}$ a  $(\Lambda,E,m)$-localized event if
there exist  disjoint  $(\Lambda,E,m)$-good dense  bevents
	$\{\cC_{\Lambda,[B_{i}],S_{i}}\}_{i=1,2,\ldots,I}$  such that 
\beq \label{adapted3}
\Omega_{\Lambda}= \bigsqcup_{i=1}^{I} \cC_{\Lambda,[B_{i}],S_{i}} .
\eeq
\end{definition}

We prove the following multiscale analysis.

\begin{proposition}\label{propMSA}
Let $\eps>0$, $\beta=4/3+$ and $p<\frac38$ be given. There exists $E_0>0$ and $m>0$ such that if $L_1$ is large enough, then for all $E\in [0,E_0]$ and all $L\ge L_1$, there exists an $(\Lambda_L,E,m)$-localized event $\Omega_{\Lambda_L}$ with $\P(\Omega_{\Lambda_L})\ge 1 - L^{-pd}$.
\end{proposition}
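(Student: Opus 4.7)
The plan is to prove Proposition~\ref{propMSA} by induction on the scales $L_k$, following the Bourgain--Kenig multiscale analysis \cite{B,BK} adapted to the Cantor setting via the equivalence classes of Definition~\ref{defclass} and the bevent formalism of \cite{GHK2}. For the base step at scale $L_1$, I would pick $E_0>0$ small enough and $L_1$ large enough that on a high-probability event the spectrum of $H_{\om,t_S,\Lambda_{L_1}}$ is bounded away from $[0,E_0]$ uniformly in the free variables $t_S$; concretely, this holds as soon as a macroscopic fraction of sites $\zeta\in\Lambda_{L_1}\cap\Z^d$ carry an $\om_\zeta\in\cK$ bounded below, using the lower bound $u\ge u_-\chi_{\Lambda_{\delta_-}(0)}$ from \eqref{u}. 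A Combes--Thomas estimate then supplies the resolvent bounds \eqref{weg} and \eqref{good} for every $E\in[0,E_0]$, and one packages the event into a disjoint union of good dense $\Lambda_{L_1}$-bevents of total probability at least $1-L_1^{-pd}$.

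For the inductive step from $L_k$ to $L_{k+1}$, I would tile $\Lambda_{L_{k+1}}$ by overlapping sub-boxes $\Lambda_{L_k}(y_j)$ and call the large box \emph{tentatively good} when at most one sub-box fails to admit a good dense $\Lambda_{L_k}$-bevent decomposition at energy $E$. Using the induction hypothesis together with the independence of well-separated sub-boxes and the identity $\chi_{\widehat\Lambda}V_{\om,\Lambda}=\chi_{\widehat\Lambda}V_{\om,\Lambda'}$, the probability of two or more bad sub-boxes is controlled for the hypothesized values of $\beta$ and $p$. Conditional on having at most one bad sub-box, the resolvent-expansion and Schur-complement arguments of \cite{B,BK} reduce the task to a Wegner-type statement: after fixing the $[B]$-configuration of non-free sites in a proposed $\Lambda_{L_{k+1}}$-dense bevent, the probability (in the free variables $t_S$) that $H_{\om,t_S,\Lambda_{L_{k+1}}}$ has an eigenvalue within $\e^{-L_{k+1}^{1-\eps}}$ of $E$ must be at most $L_{k+1}^{-pd}$.

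This scale-by-scale Wegner estimate is the central difficulty, and is where both the hierarchical structure of $\cK$ and the LYM inequality for multisets come in. At resolution $k+1$ each free site $\zeta\in S$ ranges over $K=2^{k+1}$ equivalence classes $I^{(k+1)}_j$, and two distinct classes are separated by a gap $G_{k+1}\approx\e^{-L_k}$, as in \eqref{defgap}. Fixing the $[B]$-variables, I would combine the Hellmann--Feynman formula $\partial_{t_\zeta}E_n(\om,t_S)=\langle\phi_n,u(\cdot-\zeta)\phi_n\rangle$ with the Bourgain--Kenig quantitative unique continuation to show that moving a single $t_\zeta$ across a gap of size $G_{k+1}$ shifts $E_n$ by at least $cG_{k+1}$, which is far larger than the target tolerance $\e^{-L_{k+1}^{1-\eps}}$. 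Consequently the set of free configurations in $\{1,\ldots,K\}^S$ whose spectra meet the window forms an antichain, or a low-rank union of antichains, in the componentwise partial order, and the LYM inequality for multisets \cite{A} bounds its normalized cardinality essentially by a power of $|S|^{-1/2}$. Since the density condition \eqref{densitycond} forces $|S|$ to be of order $L_{k+1}^{d-}$, this produces the required $L_{k+1}^{-pd}$ once $\beta=4/3+$ and $p<3/8$. The hard part is precisely this Wegner estimate: extending the Sperner/LYM counting from the Bernoulli two-value case of \cite{BK} to the multivalued setting, and simultaneously balancing the three small parameters $G_{k+1}$, $|S|^{-1/2}$ and $\e^{-L_{k+1}^{1-\eps}}$ so that the combinatorial gain beats both the covering loss and the proliferation of equivalence classes $K=2^{k+1}$ at each scale.
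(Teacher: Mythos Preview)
Your overall strategy matches the paper's: run the Bourgain--Kenig multiscale analysis in the bevent framework of \cite{GHK2}, get the initial step from a positivity/large-deviation argument plus Combes--Thomas, and reduce the induction step to a scale-by-scale Wegner estimate proved via an antichain bound (the LYM inequality for multisets) on the set $C_S\simeq\{1,\dots,K\}^{|S|}$ of equivalence classes of free configurations. That is exactly what the paper does.

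There is, however, one genuine gap in your Wegner sketch. You write that moving a single $t_\zeta$ across a gap of size $G_{k+1}$ shifts the eigenvalue by at least $cG_{k+1}$. The constant here is \emph{not} uniform: the Bourgain--Kenig quantitative unique continuation only gives $\partial_{t_\zeta}E\ge \e^{-c_3\ell^{4/3}\log\ell}$, so the gain from a single class jump is of order $\e^{-c_3\ell^{4/3}\log\ell}\,\e^{-L_k}$. More importantly, you omit the competing \emph{loss} term: two configurations $\om,\om'$ with $[\om_S]<[\om'_S]$ can still differ at every site where $[\om_i]=[\om'_i]$, since an equivalence class is an interval of length $\e^{-L_{k+1}}$, not a point. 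One therefore needs the two-sided bound $\e^{-c_3\ell^{4/3}\log\ell}\le\partial_{t_\zeta}E\le \e^{-c_2\ell}$ and must check that
\[
\e^{-c_3\ell^{4/3}\log\ell}\,\e^{-L_k}\;\gg\;|S|\,\e^{-c_2\ell}\,\e^{-L_{k+1}},
\]
which is precisely where the synchronization $L_k\le L^{1-\eps}\le L_{k+1}$ of Definition~\ref{defclass} and the choice $\beta=4/3+$ are used. Without this step the antichain conclusion does not follow, because comparable classes $[\om_S]<[\om'_S]$ could in principle both land in the spectral window due to the within-class ``noise''. This balancing is the actual content of the paper's Wegner lemma in Section~\ref{sectWegner}, and you should make it explicit.
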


To prove Proposition~\ref{propMSA}, we use the Bourgain-Kenig multiscale analysis adapted to bevents as done in \cite{GHK2}. The Wegner estimate of Bourgain-Kenig as stated in \cite[Lemma~5.1 (and 5.1')]{BK} is translated into ``bevents" language in  \cite[Lemma~5.10]{GHK2}.

Theorem~\ref{thmloc} follows from Proposition~\ref{propMSA} as in  \cite[Section~6]{GHK2}.
The initial condition can be obtained by  the argument in \cite{BK,GHK2}  using a large deviation result for $\mu$ to estimate the probability that the averaged sum of the random variables in a cube of size, say, $(\log L_0)^2$ is less than half its mean, where $L_0$ is the initial length scale. We refer to \cite{GKnext} for a proof of the initial condition with an arbitrary measure $\mu$, as well as for dynamical localization.

It thus remains to prove  \cite[Lemma~5.10]{GHK2} for our particular probability distribution $\mu$. This is the purpose of Section~\ref{sectWegner}.


\section{Maximal antichain in posets}

In this section we briefly collect some tools and facts coming from the theory of posets (partially ordered multisets).

Let $\cM=\{1,2,\cdots,K\}^n$ be a multiset with partial order $x\le y$ iff $x_i\le y_i$ for all $1\le i \le n$. Two elements $x,y\in\cA$ are are comparable if $x\le y$ or $y\le x$. We define the rank function $r(x)=\sum x_i$ and the rank number $N_r$  as the number of $x\in\cM$ with rank $r$. An antichain $\cA\subset \cM$ is a collection of $x$'s such that no elements $x,y\in\cA$ are comparable. We further equip $\cM$ with the  discrete uniform probability structure: $\P_{\cM}(x)=K^{-n}$.

\begin{lemma}\label{lemsperner}
Let $\cA\subset \cM$ an antichain, then $\P_{\cM}(\cA)\le c(K \sqrt{n})^{-1}$.
\end{lemma}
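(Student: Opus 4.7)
The plan is to combine the LYM inequality for products of chains (the multiset analogue of Sperner's theorem) with a local central limit estimate for sums of uniform integer-valued random variables.

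First, I would partition the antichain by rank: $\cA = \bigsqcup_r \cA_r$ with $\cA_r := \{x \in \cA : r(x) = r\}$. I would then invoke the LYM inequality for multisets (as referenced in \cite{A}): for any antichain $\cA \subset \cM = \{1,\ldots,K\}^n$,
\[
\sum_r \frac{|\cA_r|}{N_r} \le 1.
\]
The standard proof is by double-counting maximal chains in $\cM$, using that the product of chains $\{1,\dots,K\}^n$ enjoys the normalized matching property (equivalently, that the rank-generating polynomial $(1+q+\cdots+q^{K-1})^n$ is log-concave, so that $\cM$ is strongly Sperner). The inequality immediately yields $|\cA| \le \max_r N_r$, and hence
\[
\P_\cM(\cA) \;\le\; \frac{\max_r N_r}{K^n}.
\]

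Next, I would interpret the right-hand side probabilistically. Let $U_1,\ldots,U_n$ be i.i.d.\ uniform on $\{1,\ldots,K\}$ and set $S_n := U_1+\cdots+U_n$. Then $N_r/K^n = \P(S_n = r)$. Since $\mathrm{Var}(U_1) = (K^2-1)/12 \ge cK^2$ for $K\ge 2$, a standard local central limit theorem (or a direct Fourier-inversion estimate of the characteristic function $\phi(t) = \frac1K\sum_{j=1}^K e^{itj}$, controlling it away from the origin by the Diophantine non-degeneracy of $U_1$ on $\Z$) gives
\[
\sup_r \P(S_n = r) \;\le\; \frac{c}{\sigma\sqrt{n}} \;\le\; \frac{c'}{K\sqrt{n}},
\]
which is exactly the desired bound. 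The case $K=1$ is trivial and the small-$n$ regime can be absorbed into the constant.

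The main obstacle I anticipate is the LYM inequality itself: while in the Boolean case ($K=2$) it is the classical Bollob\'as--Lubell--Yamamoto--Meshalkin bound, its extension to products of chains requires the normalized matching (or log-concavity) property, which is standard in the poset literature but not entirely trivial to establish from scratch. The local central limit estimate in the second step is routine once one has the variance lower bound $\sigma \gtrsim K$.
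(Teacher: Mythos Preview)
Your proposal is correct and follows essentially the same approach as the paper: apply the LYM inequality for multisets to reduce to $\P_\cM(\cA) \le K^{-n}\max_r N_r$, and then bound the maximal rank number. The only difference is that the paper quotes the estimate $\max_r N_r \asymp K^n/\bigl(\sqrt{(K-1)(K+1)}\,\sqrt{n}\bigr)$ directly from Anderson's book \cite[Theorem~4.3.6]{A}, whereas you recover the same bound via a local central limit theorem for the sum of $n$ i.i.d.\ uniform variables on $\{1,\dots,K\}$.
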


\begin{remark}\label{remK}
Note that in our particular setting we will have $K=2^{k+1}\approx (\log n)^{\log 2 / \log \beta}$, so the factor $K^{-1}$ in \eqref{discprob} does not really improve the probability.
\end{remark}

\begin{proof}
We first recall the LYM inequality \cite[Theorem~2.3.1]{A}: It asserts that if $\cA$ is an antichain then
\begin{equation}\label{LYM}
\sum_{x\in\cA} \frac1{N_{r(x)}} \le 1.
\end{equation}
We combine \eqref{LYM} with an estimate on the maximal rank number \cite[Theorem~4.3.6]{A}, namely, for some positive constants $c,C$,
\begin{equation}
\frac{cK^n}{\sqrt{(K-1)(K+1)}\sqrt{n}} \le \max_{r} N_{r} \le \frac{CK^n}{\sqrt{(K-1)(K+1)}\sqrt{n}} .
\end{equation}
  We thus have 
\beq\label{discprob}
\P_{\cM}(\cA)= \frac1{K^n} \sum_{x\in \cA} 1 \le \frac{\max N_r} {K^n} \sum_{x\in \cA} \frac1{N_{r(x)}} \le \frac{\max N_{r}} {K^n} \lesssim \frac1{K \sqrt{n}}.
\eeq
\end{proof}


\section{The Wegner estimate}
\label{sectWegner}

To prove \cite[Lemma~5.10]{GHK2} in our particular setting we need to prove the following lemma.

\begin{lemma} Consider a box $\Lambda=\La_L$, let $\ell=L^\rho$ with $\rho= \beta^{-1}=\frac 3 4 -$
 (so $L^{\frac 4 3 \rho} =L^{1-}$).  Let $\cC_{\Lambda,[B],S}\subset\Omega$ be given. Let $S \subset {\widehat{\Lambda}}$ with $\abs{S}= \ell^{d-}$,  pick (and fix) $\omega \in \cC_{\Lambda,[B],S}$, and
 set
 \beq
 H(t_S):=H_{{\om},{t_S},\Lambda}  \quad \text{for all $t_S \in  [0,1]^S$}.
 \eeq
Consider an energy $E_0$, set $I= (E_0 - \e^{-c_1 \ell}, E_0 +\e^{-c_1 \ell})$.
 Let $E_\tau(\om,t_S)$ be a continuous eigenvalue parametrization of $\sigma(H(t_S))$
 such that $E_\tau(0) \in I$ (a finite family).  Let $E_\om(t_S)=E_{\tau_0}(t_S)$ for some $\tau_0$, and $E(\om)=E_\om(\om_S)$, $\om_S=\{\omega\}_{i\in S}$.   Suppose
 \beq\label{hypderiv}
\e^{-c_3 \ell^{\frac 4 3}  \log \ell}\le \frac {\partial}  {\partial t_i }E(t_S)\le \e^{- c_2 \ell} \quad \text{for all $i \in S$ if $E(t_S) \in I$}.
 \eeq
Then, there exists $\Omega_{[B]}=\bigsqcup_{l} ([B],[S]_l)\subset \cC_{\Lambda,[B],S}$ (the disjoint union being finite), such that
 \beq\label{wegset}
\set{\om \in \cC_{\Lambda,[B],S}; \; E(\omega )\in (E_0 -\e^{-2 c_3 \ell^{\frac 4 3}  \log \ell}, E_0 +\e^{-2c_3 \ell^{\frac 4 3}  \log \ell})  } \subset \cC_{\Lambda,[B],S}\setminus \Omega_{[B]},
 \eeq
 and
 \beq\label{wegproba}
 \P (\cC_{\Lambda,[B],S}\setminus \Omega_{[B]}; \; \cC_{\Lambda,[B],S}) \le  C \ell^{- \frac  d 2 +}.
 \eeq
 \end{lemma}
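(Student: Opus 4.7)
The plan is to refine the $\Lambda$-bevent $\cC_{\Lambda,[B],S}$ by partitioning its free sites $S$ according to their classes at a resolution $k'>k$ one step finer than the resolution $k$ associated to $L$, and reduce \eqref{wegproba} to Lemma~\ref{lemsperner}. Setting $K:=2^{k'}$ and labelling the intervals of $\cK^{(k')}$ left-to-right by $1,\ldots,K$, the refined sets
\beq
B(\xi_S)\;:=\;\{\omega\in\cC_{\Lambda,[B],S}\colon\omega_i\in I^{(k')}_{\xi_i}\ \forall\,i\in S\},\qquad\xi_S\in\{1,\ldots,K\}^S,
\eeq
partition $\cC_{\Lambda,[B],S}$ with $\P(B(\xi_S)\mid\cC_{\Lambda,[B],S})=K^{-|S|}$ (under the conditional law the $\{\omega_i\}_{i\in S}$ are independent and $\mu$ is uniform on the intervals of $\cK^{(k')}$). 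Call $\xi_S$ \emph{bad} if $B(\xi_S)$ meets the window $\tilde J:=(E_0-\e^{-2c_3\ell^{4/3}\log\ell},E_0+\e^{-2c_3\ell^{4/3}\log\ell})$, and set $\Omega_{[B]}:=\bigsqcup_{\xi_S\text{ not bad}}B(\xi_S)$; inclusion \eqref{wegset} is then automatic.

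The key step is to show that the set $\cA$ of bad labels is an antichain in the componentwise-ordered poset $\{1,\ldots,K\}^S$; granted this, Lemma~\ref{lemsperner} immediately yields
\beq
\P(\cC_{\Lambda,[B],S}\setminus\Omega_{[B]}\mid\cC_{\Lambda,[B],S})=\frac{|\cA|}{K^{|S|}}\le\frac{c}{K\sqrt{|S|}}\le C\ell^{-d/2+},
\eeq
which is \eqref{wegproba} since $|S|=\ell^{d-}$. Fix bad $\xi_S<\xi'_S$ with witnesses $\omega^{(1)}\in B(\xi_S)$, $\omega^{(2)}\in B(\xi'_S)$, and for $\eta\in\{1,\ldots,K\}^S$ let $m(\eta)\in[0,1]^S$ have $i$-th coordinate equal to the midpoint of $I^{(k')}_{\eta_i}$. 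The derivative upper bound in \eqref{hypderiv} gives
\beq
\abs{E(m(\xi_S))-E(\omega^{(1)})}\le\tfrac12|S|\alpha_{k'}\e^{-c_2\ell},
\eeq
and similarly for $\xi'_S$; under the choice of $k'$ below, the right-hand side is dominated by $\e^{-2c_3\ell^{4/3}\log\ell}$, placing both $E(m(\xi_S))$ and $E(m(\xi'_S))$ within $2\,\e^{-2c_3\ell^{4/3}\log\ell}$ of $E_0$. On the other hand, $m(\xi_S)\le m(\xi'_S)$ coordinatewise with at least one coordinate differing by at least $G_{k'}$; global monotonicity of $E$ (from $u\ge 0$) keeps $E$ along the straight segment from $m(\xi_S)$ to $m(\xi'_S)$ inside $[E(m(\xi_S)),E(m(\xi'_S))]\subset I$, so the derivative lower bound in \eqref{hypderiv} activates throughout, and integration yields
\beq
E(m(\xi'_S))-E(m(\xi_S))\ge G_{k'}\,\e^{-c_3\ell^{4/3}\log\ell}\ge 4\,\e^{-2c_3\ell^{4/3}\log\ell}
\eeq
by the same choice of $k'$. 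This contradicts both endpoints lying in a $4\,\e^{-2c_3\ell^{4/3}\log\ell}$-window around $E_0$, establishing the antichain property.

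The main obstacle is the simultaneous scale matching in choosing $k'$: one needs both $\alpha_{k'}=\e^{-L_{k'}}$ small enough that $|S|\alpha_{k'}\e^{-c_2\ell}\le\e^{-2c_3\ell^{4/3}\log\ell}$ (equivalently $L_{k'}\gtrsim 2c_3\ell^{4/3}\log\ell$) \emph{and} $G_{k'}\approx\e^{-L_{k'-1}}$ large enough that $G_{k'}\ge 4\e^{-c_3\ell^{4/3}\log\ell}$ (equivalently $L_{k'-1}\lesssim c_3\ell^{4/3}\log\ell$). Since consecutive scales are tied by $L_{k'}=L_{k'-1}^\beta$ and the natural window $L_k\in[L^{1-\eps},L^{(1-\eps)\beta})$ must bracket the threshold $c_3\ell^{4/3}\log\ell\asymp L^{1-}$, compatibility forces the parameter choices $\beta=4/3+$, $\rho=1/\beta=3/4-$ (so that $\ell^{4/3}=L^{4/(3\beta)}=L^{1-}$) together with a suitably chosen small $\eps$, exactly the regime prescribed by Proposition~\ref{propMSA}; the combinatorial yield $(K\sqrt{|S|})^{-1}\le C\ell^{-d/2+}$ then matches the exponent $p<\tfrac38$ needed to continue the multiscale induction.
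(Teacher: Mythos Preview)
Your overall strategy is exactly the paper's: partition the free sites into Cantor classes, show that the labels for which the eigenvalue lands in the tiny window form an antichain via the two-sided derivative bound \eqref{hypderiv} and the Cantor geometry (gap vs.\ interval length), then invoke Lemma~\ref{lemsperner}. Your device of passing to the midpoints $m(\xi_S)$ is a clean variant: since $\xi_S<\xi'_S$ forces $m(\xi_S)\le m(\xi'_S)$ coordinatewise, the straight segment is monotone and stays in $I$, so the lower derivative bound applies throughout. The paper instead compares $\omega,\omega'$ directly and handles the coordinates with $[\omega_i]=[\omega'_i]$ (where $\omega_i,\omega'_i$ need not be ordered) as an error term bounded by $|S|\,\alpha_{k+1}\,\e^{-c_2\ell}$.

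There is, however, a genuine off-by-one problem in your choice of resolution. You take $k'=k+1$, one step \emph{finer} than the resolution $k$ associated to $L$; the paper works at the resolution of $L$ itself. At your $k'$ the gap is $G_{k+1}\approx\e^{-L_k}$, and since $L_{k-1}<L^{1-\eps}\le L_k$ one has $L_k\in[L^{1-\eps},L^{(1-\eps)\beta})$. Your required lower bound $G_{k'}\ge4\,\e^{-c_3\ell^{4/3}\log\ell}$ amounts to $L_k\lesssim c_3\ell^{4/3}\log\ell\asymp L^{4/(3\beta)}$; but $(1-\eps)\beta>4/(3\beta)$ whenever $3\beta^2(1-\eps)>4$, which holds for all $\beta$ near $4/3$ and small $\eps$ (indeed $3\beta^2\to16/3>4$). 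Thus for $L$ near the left end of its range the gap $G_{k+1}$ is far too small and your antichain argument breaks down. The paper's choice gives gap $G_k\approx\e^{-L_{k-1}}$ with $L_{k-1}<L^{1-\eps}$ and interval length $\alpha_k=\e^{-L_k}$ with $L_k\ge L^{1-\eps}$; the scale matching $L_{k-1}\lesssim c_3\ell^{4/3}\log\ell\lesssim L_k$ then sits exactly at $L^{1-\eps}$, which is what the coupling $4/(3\beta)\approx1-\eps$ (i.e.\ $\beta=4/3+$, $\eps$ small) is designed to achieve. Replacing your $k'$ by $k$ (and correspondingly $\alpha_{k'},G_{k'}$ by $\alpha_k,G_k$) repairs the argument with no other change.
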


\begin{proof}
Recall \eqref{eps}, that is we are given $\eps'>0$ such that $(1+\eps')(1-\eps)=1$. We have $L\in[L_k^{1+\eps'}, L_{k+1}^{1+\eps'}]$ and thus 
\beq\label{Leps}
L_k \le L^{1-\eps}\le L_{k+1}.
\eeq
We set $\ell=L^\rho$ and assume that $\Lambda_\ell$ is a good box with probability $\ge 1 - \ell^{-p}$, as well as at previous scales.
Let $\omega\in\cC_{\Lambda,[B],S}$ and assume that 
\beq
E(\omega )\in (E_0 -\e^{- 2c_3 \ell^{\frac 4 3}  \log \ell}, E_0 +\e^{-2c_3 \ell^{\frac 4 3}  \log \ell}). 
\eeq
We show that for any $\om'$ such that $[\om]=([B],[S])$ and $[\om']=([B],[S]')$ are comparable and distinct, one has
\beq\label{intervbis}
E(\om')\not\in (E_0 -\e^{-2 c_3 \ell^{\frac 4 3}  \log \ell}, E_0 +\e^{-2c_3 \ell^{\frac 4 3}  \log \ell}).
\eeq
Indeed, suppose, for instance, that $[\om]<[\om']$. If $[\om_i]=[\om'_i]$ then $|\om'_l - \om_l|\le \e^{-L_{k+1}}$, due to  Definition~\ref{defclass}. And if $[\om_i]<[\om'_i]$ then by construction these two points are separated by a gap of length $\ge \frac12 \e^{-L_k}$ (for $L$ large enough). Thanks to \eqref{hypderiv}, we get
\begin{align}
|E(\om')-E(\om)| & \ge  \frac12 \e^{- c_3 \ell^{\frac 4 3}  \log \ell} \e^{-L_k} - \ell^{d-} \e^{- c_2 \ell} \e^{-L_{k+1}} \\
& \ge \frac12 \e^{- c_3 \ell^{\frac 4 3}  \log \ell} \e^{-L_k}
\left(1 -  4\e^{-L_{k+1}+L_k}\e^{ c_3 \ell^{\frac 4 3}  \log \ell} \right) \\
& \ge \frac12 \e^{- c_3 \ell^{\frac 4 3}  \log \ell} \e^{-L_k}
\left(1 -  4\e^{-\frac14 L_{k+1}} \right) \\
& \ge \e^{- c_3 \ell^{\frac 4 3}  \log \ell} 
\end{align}
for $L$ large enough, where we used that
\beq
c_3 \ell^{\frac 4 3}  \log \ell \le \frac14 L^{1-\eps}.
\eeq
We thus obtain \eqref{intervbis}. As a consequence $E(\om')\in (E_0 -\e^{- 2c_3 \ell^{\frac 4 3}  \log \ell}, E_0 +\e^{-2c_3 \ell^{\frac 4 3}  \log \ell})$ can only happen for a set of $\omega'$ such that the associated $[S]$ and $[S]'$ are non comparable, that is, for a collection of $[S]$'s that belongs to a finite antichain $\cA_{[B]}=\set{[S]_1, [S]_2, \cdots}$. Setting 
\beq
\cC_{\Lambda,[B],S}\setminus\Omega_{[B]}:= ([B],\cA_{[B]}) = \bigsqcup_l ([B],[S]_l),
\eeq
we have proved \eqref{wegset}. It remains to show the probabilistic estimate \eqref{wegproba}.

Let us give an element $[S]=\set{I^{(k+1)}_{j_i}}_{i\in S}\in C_S$  and $\om\in\cC_{\Lambda,[B],S}$. By construction of $\mu$, $\mu(I^{(k+1)}_{j_i})=K^{-1}$ for any $i\in S$, where $K= 2^{k+1}$. We thus replace the (continuous) probability space $(\Omega,\P)$ conditionned to $\cC_{\Lambda,[B],S}$ by the discrete probability space $(C_S, \P^o_S)\simeq(\{[S]\in C_S\}, \P^o_S) $, where $\P^o_S$ is the uniform measure with  weight $K^{-|S|}$. 
We thus get 
\begin{align}
& \P\set{\cC_{\Lambda,[B],S}\setminus\Omega_{[B]}; \; \cC_{\Lambda,[B],S}} =  \P^o_S(\cA_{[B]}).
\end{align}

We now apply Lemma~\ref{lemsperner} to the set $C_S\simeq\{1,2,\cdots,K\}^n$, with $K=2^{k+1}$ and $n=|S|$, equipped with the discrete uniform probability $\P^o_S$. This ends the proof.
\end{proof}



\end{document}